\documentclass[12pt]{article}

\usepackage[margin=0.950in]{geometry}

\usepackage{amsmath,amssymb,amsthm, csquotes}
\usepackage{newpxtext,newpxmath, tikz, cancel}
\usetikzlibrary{matrix, arrows.meta, positioning}
\usepackage{microtype}
\usepackage[authoryear,round]{natbib}
\usepackage{xcolor}
\newcommand{\N}{\mathbb N}
\newcommand{\Nzero}{\mathbb N_0}
\newcommand{\Q}{\mathbb Q}
\newcommand{\R}{\mathbb R}
\newcommand{\con}{\mathrm{con}}
\newcommand{\lex}{\mathrm{lex}}
\theoremstyle{definition}
\newtheorem{definition}{Definition}
\newtheorem{remark}{Remark}
\theoremstyle{plain}
\newtheorem{lemma}{Lemma}
\newtheorem{claim}{Claim}
\newtheorem{proposition}{Proposition}

\title{The existence of anonymous, stationary, monotone social welfare functions satisfying restricted dominance: an answer to an open question in intergenerational equity%
\footnote{We sincerely thank the Editor, the Associate Editor, Vicki Knoblauch, and the two anonymous referees for several useful comments and suggestions that improved the exposition of the paper. The first author would also like to thank Kuntal Banerjee for a discussion on this topic.}}
\author{Ram Sewak Dubey\thanks{%
Economics Department, Feliciano School of Business, Montclair State University, Montclair, NJ 07043, USA; E-mail: dubeyr@montclair.edu} \and 
Giorgio Laguzzi\thanks{%
Università del Piemonte Orientale, 15121 Alessandria, Italy; Email: giorgio.laguzzi@uniupo.it}  \and 
Francesco Ruscitti\thanks{%
Department of Economics, John Cabot University, Via della Lungara 233, 00165 Rome, Italy; Email: fruscitti@johncabot.edu}}
\date{\today}

\begin{document}
\maketitle

\begin{abstract}
In this note, we consider social welfare functions on infinite utility streams and answer an open question posed in \citet[Remark 1]{AsheimBanerjeeMitra2021}.
There exists a real-valued representation of social welfare preferences that satisfies monotonicity, stationarity, restricted dominance, and strong anonymity, hence also finite anonymity.
Our social welfare function satisfies a stronger finite-rank sensitivity condition, Liminf-Restricted Dominance, in the sense of \citet{asheim2022} and \citet{AsheimKamagaZuber2025}.
Our construction shows that limited discontinuity is not redundant in \citet[Proposition 2]{AsheimBanerjeeMitra2021}: once that diagonal continuity condition is dropped, even arbitrary-permutation invariance can coexist with stationarity and non-trivial sensitivity.

\noindent \emph{Keywords:} \texttt{Strong Anonymity,}\; \texttt{Restricted Dominance,}\; \texttt{Social Welfare Function,}\; \texttt{Stationarity.} 

\noindent \emph{Journal of Economic Literature} Classification Numbers: 
\texttt{D60,}\; \texttt{D70,}\; \texttt{D90.}
\end{abstract}
\section{Introduction}\label{sec:1}

In this short note we investigate an open question on the non-existence of social welfare function (SWF, henceforth) stated in \citet[Remark 1]{AsheimBanerjeeMitra2021}.
The authors establish impossibility of representation of social welfare preferences by a real-valued social welfare function when the preferences satisfy monotonicity, stationarity, restricted dominance, limited discontinuity and finite anonymity in Proposition 2 of the paper.
After proving the result, they explore if the impossibility result persists when the limited discontinuity condition is dropped and frame the conjecture as follows:

\medskip
\enquote{We are unaware of whether dropping Limited discontinuity allows for the existence of a monotone swf satisfying the remaining axioms of Proposition 2: Restricted dominance, Stationarity and Finite anonymity. 
We conjecture that there exists no such swf even if Limited discontinuity is dropped, but we have neither a proof of impossibility nor an example of a swf that satisfies the remaining axioms.
This issue remains an open question.}
\medskip

We present a monotone SWF  that satisfies restricted dominance, stationarity, and finite anonymity as a counterexample.
This SWF shows that the conjecture is false.
The social welfare function we construct satisfies stronger properties (than the ones in the conjecture) in two respects
First, it is invariant under all permutations of $\N$, not only finite permutations. 
Thus it satisfies Strong Anonymity, the impartiality axiom requiring evaluation to be invariant when arbitrary permutations are applied to an infinite utility stream. 
Second, the example satisfies a stronger form of restricted dominance, namely Liminf-Restricted Dominance: an improvement in a component increases social welfare whenever the component improved upon lies below the limit inferior of the resulting stream.
This is the sensitivity boundary emphasized by \citet{asheim2022} and used in the characterization of Strongly Anonymous Utilitarianism in \citet{AsheimKamagaZuber2025}. 
Our construction is not a strongly anonymous utilitarian criterion in their sense; rather, it is a lexicographic threshold-count criterion. 
Its role is to show that the impossibility result in \citet{AsheimBanerjeeMitra2021} comes from the continuity-type regularity assumption, not from stationarity and anonymity alone.

The rest of the paper is organized as follows.
In section 2 we collect the necessary definitions for examining the conjecture.
In section 3, we present the counterexample and show that it violates limited discontinuity as it must in view of \citet[ Proposition 2]{AsheimBanerjeeMitra2021}.
Section 4 contains brief concluding remarks.

\section{Framework and definitions}
\label{sec:2}
Let $\mathbb{R}$, $\mathbb{Q}$, and $\mathbb{N}$ be the set of real numbers, rational numbers, and natural numbers, respectively. 
We denote $\mathbb{N}\cup\{0\}$ by $\Nzero$ and $\Nzero\cup \{\infty\}$ by $A$.

Let $Y=[0,1]$ be the set of all possible utilities that any generation can achieve. 
Then, $X\equiv Y^{\mathbb{N}}$ is the set of all feasible utility streams. 
We denote an element of $X$ by $x =\left(x_{1}, x_{2},\cdots \right)$. 
For all $n\in \mathbb{N}$, $x_{n}\in Y$ represents the amount of utility that period-$n$ generation owns.
For $x,y\in X$, we write $x\ge y$ if $x_t\ge y_t$ for every $t\in\N$, and $x>y$ if $x\ge y$ and $x\ne y$.
If $b\in Y$, we write $b_{\con}=(b,b,b,\cdots)$.

A binary relation on $X$ is denoted by $\succsim$, with its symmetric and asymmetric parts, denoted by $\sim$ and $\succ$, respectively, defined in the usual manner. 
A \emph{social welfare order} (SWO) is, by definition, a complete and transitive binary relation on $X$. 
Given a SWO $\succsim$, we say that $\succsim$ can be \emph{represented} by a real-valued function, called a \emph{social welfare function} (SWF), if there is a mapping $W:X\rightarrow \mathbb{R}$ such that for all $x$, $y\in X$, $x\succsim y$ if and only if $W(x)\geq W(y)$.
We will be dealing with the following axioms that a SWF should satisfy

\begin{definition}[Monotonicity]
The SWF $W$ is monotone if, for all $x,y\in X$,
\[
  x>y \quad\Longrightarrow\quad W(x)\ge W(y).
\]
\end{definition}
\noindent A permutation $\pi$ of $\N$ is a bijection on $\N$.
Permutation $\pi$ of $\N$ is finite if there exists $\tau\in\N$ such that $\pi(t)=t$ for all $t\ge \tau$.
For $x\in X$, we let
\[
  x_{\pi}=(x_{\pi(1)},x_{\pi(2)},\ldots).
\]

\begin{definition}[Finite anonymity]

The SWF $W$ satisfies finite anonymity if, for every $x\in X$ and every finite permutation $\pi$ of $\N$,
\[
  W(x_{\pi})=W(x).
\]
\end{definition}

\begin{definition}[Strong anonymity]
The SWF $W$ satisfies strong anonymity if, for every $x\in X$ and every permutation $\pi$ of $\N$,
\[
  W(x_{\pi})=W(x).
\]
\end{definition}

\begin{definition}[Restricted dominance]
The SWF $W$ satisfies restricted dominance if, for all $a,b\in Y$ with $a<b$,
\[
  W(a,b_{\con})<W(b_{\con}).
\]
\end{definition}
\noindent For any $x\in X$, write
\[
  \ell(x)=\liminf_{t\to\infty}x_t .
\]

\begin{definition}[Liminf-Restricted Dominance]
The SWF $W$ satisfies Liminf-Restricted Dominance if, for all $x,y\in X$ such that $x_i>y_i$ for some $i\in\N$ and $x_j=y_j$ for all $j\in \N\setminus\{i\}$,
\[
  \ell(x)>y_i
  \quad\Longrightarrow\quad
  W(x)>W(y).
\]
\end{definition}

\begin{definition}[Stationarity]
The SWF $W$ satisfies stationarity if, for all $x,y\in X$ with $x_1=y_1$,
\[
  W(x)\ge W(y)
  \quad\Longleftrightarrow\quad
  W(x^{\prime})\ge W(y^{\prime}),
\]
where $x^{\prime}= (x_2, x_3, \cdots)$ and $y^{\prime} = (y_2, y_3, \cdots)$.
\end{definition}
\noindent For comparison with the impossibility result  of \citet[ Proposition 2]{AsheimBanerjeeMitra2021}, we define the diagonal mapping
\[
 f_W:[0,1]\to\R,
  \qquad
  f_W(c)=W(c_{\con}).
\]
\begin{definition}[Limited discontinuity]
The SWF $W$ satisfies limited discontinuity if the set of discontinuity points of $f_W$ is not dense in $[0,1]$.
\end{definition}
\section{A counterexample}
In this section, we construct an example of a monotone SWF satisfying restricted dominance, stationarity, and finite anonymity that proves the conjecture to be false.
It is useful to recall the definition of lexicographic order.
\begin{definition} (Lexicographic order $\geq_{\lex}$ on  $A^{\N}$) 
For all $x, y\in A^{\N}$, $x\geq_{\lex}y$ if and only if either $x_{t}=y_{t}$ for all $t\in \N$, or $x_{T}>y_{T}$ for some $T\in \N$ and $x_{t}=y_{t}$ for all $t\in \N$ such that $t<T$.
\end{definition}
We first describe a lexicographic order on $A^{\N}$ and show that it admits a real-valued representation.
\begin{lemma}\label{L1}
\emph{There exists a real-valued function $\Phi: A^{\N}\to\R$ on $A^{\N}$ such that for all $u, v\in A^{\N}$,}
\[
  u\le_{\lex}v
  \quad\Longleftrightarrow\quad
  \Phi(u)\le \Phi(v).\]
\end{lemma}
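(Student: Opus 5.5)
We construct $\Phi$ explicitly as a nested-interval, base-expansion style sum. Since $A=\Nzero\cup\{\infty\}$ is countable and well-ordered by the usual order (with $\infty$ as the top element), the first step is to encode each coordinate by an order-preserving map. Fix a strictly increasing map $g:A\to(0,1)$, for instance
\[
  g(n)=1-\frac{1}{n+2}\ \ (n\in\Nzero), \qquad g(\infty)=1 .
\]
The values $g(a)$ are spaced so that between $g(a)$ and the next value there is a gap. That gap is what lets the tail contribute without overturning the comparison at an earlier coordinate.

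The second step handles this gap, which is the main obstacle. Because $\infty$ has no immediate predecessor, the gaps $g(n+1)-g(n)$ shrink to zero, so no fixed geometric weight $\beta^t$ dominates the tails uniformly. I would avoid this by nesting intervals. Assign to each finite prefix $(a_1,\dots,a_t)$ a closed interval $I(a_1,\dots,a_t)\subset[0,1]$, defined recursively:
\begin{itemize}
\item Start with $I(\emptyset)=[0,1]$.
\item Given $I=I(a_1,\dots,a_{t-1})=[p,q]$ with $p<q$, choose pairwise disjoint nondegenerate closed subintervals $J_a\subset[p,q]$, $a\in A$, ordered so that $a<b$ implies $J_a$ lies strictly to the left of $J_b$.
\item Set $I(a_1,\dots,a_t)=J_{a_t}$.
\end{itemize}
Concretely, in the normalized interval one can take $J_n=[1-\frac{1}{n+2},\,1-\frac{1}{n+2}+\frac{1}{3(n+2)(n+3)}]$ and $J_\infty=[1-\frac{1}{1000},1]$ rescaled suitably. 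Simpler still, use a map $A\to\Nzero$ reordering with $\infty\mapsto 0$ at the top: take $J_\infty$ to be the top third and place the $J_n$ in the bottom two thirds with lengths $\propto 2^{-n}$ and gaps between them. The lengths shrink by at least a factor $1/3$ at each level, so $\bigcap_t I(u_1,\dots,u_t)$ is a single point, and we define $\Phi(u)$ to be that point.

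The third step verifies the two directions.
\begin{itemize}
\item If $u=v$, then $\Phi(u)=\Phi(v)$ trivially.
\item If $u<_{\lex}v$, let $T$ be the first index where they differ, so $u_T<v_T$. Both sequences share the interval $I(u_1,\dots,u_{T-1})$. Inside it, $\Phi(u)\in J_{u_T}$ and $\Phi(v)\in J_{v_T}$, and these are disjoint with $J_{u_T}$ entirely to the left. Hence $\Phi(u)<\Phi(v)$.
\end{itemize}
Since $\le_{\lex}$ is complete on $A^{\N}$ (any two distinct sequences have a first differing index), strict monotonicity on the strict part together with equality on equal elements gives $u\le_{\lex}v\iff\Phi(u)\le\Phi(v)$.

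The only real care is to keep the subintervals $J_a$ disjoint and separated even when $A$ has a limit point $\infty$. This is ensured by leaving explicit gaps between consecutive $J_n$, and between all $J_n$ and $J_\infty$.
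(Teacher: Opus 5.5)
Your proposal is correct and is essentially the paper's own proof. Both use nested, order-preserving, pairwise disjoint subintervals indexed by $A$ (the paper takes $J_n$ accumulating at $\tfrac12$ and $J_\infty=(\tfrac34,\tfrac78)$), define $\Phi(u)$ as the point in the intersection, compare at the first differing coordinate, and invoke completeness of $\le_{\lex}$ for the converse. One slip: your first concrete choice fails, because those $J_n$ accumulate at $1$ and so $J_\infty=[1-\frac{1}{1000},1]$ meets every $J_n$ with $n\ge 998$; your second choice ($J_\infty$ the top third, the $J_n$ separated inside the bottom two thirds) is the one that works.
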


\begin{proof}
We construct the function explicitly.
We define intervals $J_a$, one for each $a\in A$ as follows.
(i) For $n\in\Nzero$, we set
\[
  J_n=\left(\frac{1}{2}-\frac{1}{2^{2n+2}},\frac{1}{2}-\frac{1}{2^{2n+3}}\right).%
 \footnote{Thus $J_0=(\frac{1}{4},\frac{3}{8})$, $J_1=(\frac{7}{16},\frac{15}{32})$, $\cdots$.}
\]
(ii) For $a=\infty$, we set
\[
  J_{\infty}=\left(\frac{3}{4},\frac{7}{8}\right).
\]
Observe that (i) the length of  interval $J_n$ is $\frac{1}{2^{2n+3}}$; (ii) for $i\ne j$, the closures $\bar{J}_i$ and $\bar{J}_j$ are disjoint, i.e., $\bar{J}_i\cap\bar{J}_j =\emptyset$, and  (iii) ordered i.e., for $i<j$, the interval $J_j$ lies entirely to the right of $J_i$. 

\noindent Let $A^{<\N}$ denote the set of finite sequences with each element belonging to $A$.
We take $I_{\emptyset}=(0,1)$, and define recursively a sequence of sets by successive concatenation, starting from $I_{\emptyset}$ and then moving successively to $I_{\emptyset^{\frown}n_1}$, followed by $I_{\emptyset^{\frown}n_1^{\frown}n_2}$, etc. in the following manner.
Suppose 
\[
I_{\sigma}=I_{\emptyset^{\frown}\sigma_{1}^{\frown}\cdots^{\frown}\sigma_{n}} =(\ell_{\sigma}, r_{\sigma})
\]
has been defined for every finite sequence $\sigma=(\sigma_{1},\cdots,\sigma_{n})\in A^{n}$ of length $n$.
For $a\in A$, we write $J_a=(\alpha_a,\beta_a)$ and define
\[
  I_{\sigma^\frown a}
  =
  \bigl(\ell_{\sigma}+(r_{\sigma}-\ell_{\sigma})\alpha_a,
        \ell_{\sigma}+(r_{\sigma}-\ell_{\sigma})\beta_a\bigr).
\]
The intervals $I_{\sigma^\frown a}$ are contained in $I_{\sigma}$, have ordered disjoint closures, and have length at most $\frac{1}{8}$ times the length of $I_{\sigma}$.%
\footnote{Consider for example, $I_{\emptyset^\frown 0} = \left(0+(1-0)\cdot \frac{1}{4}, 0+(1-0)\cdot \frac{3}{8}\right) =  \left(\frac{1}{4}, \frac{3}{8}\right)$, and $I_{\emptyset^\frown 1} = \left(0+(1-0)\cdot \frac{7}{16}, 0+(1-0)\cdot \frac{15}{32}\right) =  \left(\frac{7}{16}, \frac{15}{32}\right)$. 
It is easy to verify that $I_{\emptyset^\frown 0 ^\frown1} =   \left(\frac{39}{128}, \frac{79}{256}\right)$.}
For $u=(u_1,u_2,\cdots)\in A^{\N}$, let
\[
  u|m=(u_1,\ldots,u_m)
\]
be its initial segment of length $m$.
The closed intervals
\[
  \overline{I_{u|1}}
  \supseteq
  \overline{I_{u|2}}
  \supseteq
  \overline{I_{u|3}}
  \supseteq\cdots
\]
are nested, and their lengths tend to zero.
Hence their intersection is a singleton.
Define $\Phi(u)$ to be the unique number in that intersection.

Now let $u<_{\lex}v$.
Let $k$ be the first coordinate at which $u$ and $v$ differ.
Then $u_i=v_i$ for $i<k$ and $u_k<v_k$.
Let
\[
  \sigma=(u_1,\ldots,u_{k-1})=(v_1,\ldots,v_{k-1}).
\]
By construction, $\overline{I_{\sigma^\frown u_k}}$ lies strictly to the left of $\overline{I_{\sigma^\frown v_k}}$.
Since
\[
  \Phi(u)\in\overline{I_{\sigma^\frown u_k}},
  \qquad
  \Phi(v)\in\overline{I_{\sigma^\frown v_k}},
\]
we get $\Phi(u)<\Phi(v)$.
Thus strict lexicographic order is preserved.
Further, if $\Phi(u)\le \Phi(v)$ and $\neg u\le_{\lex}v$, then $v<_{\lex}u$ since the lexicographic relation is complete, hence $\Phi(v)<\Phi(u)$, a contradiction.
Therefore $u\le_{\lex}v$ if and only if $\Phi(u)\le\Phi(v)$.
\end{proof}

We take $(q_k)_{k\in\mathbb N}$ to be the one-to-one enumeration of $\mathbb Q\cap(0,1)$ as detailed in the appendix.
We call this enumeration $\Q(01)$, i.e., $\Q(01)= (q_1, q_2, \cdots)$.
For $x\in X=Y^{\N}$ and $k\in\N$, define
\[
N_k(x)=|\{t\in\N:x_t<q_k\}|\in A,
\]
where  $|\cdot|$ denote the cardinality of a given finite (or infinite) set.
We refer to $N_k(x)$ as cut-count.
We denote the profile
\[
P(x)=(N_1(x), N_2(x),\ldots)\in A^{\N}.
\]
Profiles are compared lexicographically, with smaller cut-counts being socially preferred.
We define the SWF by using the function $\Phi$ described in Lemma \ref{L1} as follows.
\[
 W(x)= -\Phi(P(x)) = -\Phi\bigl(N_1(x),N_2(x),\cdots\bigr).%
 \footnote{For instance, let $x=(0_{\con})$, and $y=(1_{\con})$. 
It is not difficult to verify that $W(x)=-\frac{6}{7}$ and $W(y) = -\frac{2}{7}$.}
\]
The minus sign reflects the convention that smaller cut-count profiles represent higher welfare.

\begin{proposition}\label{P1}
\emph{Let $Y=[0,1]$ and $X=Y^{\N}$.
The function
\[
W(x)=-\Phi\bigl(N_1(x),N_2(x),\cdots\bigr)
\]
is a real-valued monotone SWF satisfying strong anonymity, Liminf-Restricted Dominance, restricted dominance, and stationarity.}
\end{proposition}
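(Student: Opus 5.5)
Since $W=-\Phi\circ P$ and Lemma \ref{L1} makes $\Phi$ an order embedding of $(A^{\N},\le_{\lex})$ into $\R$, every property in Proposition \ref{P1} reduces to a statement about the cut-count profile $P(x)=(N_1(x),N_2(x),\ldots)$. Concretely, $W(x)\ge W(y)$ holds if and only if $P(x)\le_{\lex}P(y)$, and $W(x)>W(y)$ holds if and only if $P(x)<_{\lex}P(y)$. I will therefore verify each axiom at the level of profiles, in the order below.

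\emph{Monotonicity and strong anonymity.} If $x\ge y$ coordinatewise, then $\{t:x_t<q_k\}\subseteq\{t:y_t<q_k\}$ for every $k$. Hence $N_k(x)\le N_k(y)$ for all $k$, so $P(x)\le_{\lex}P(y)$ and $W(x)\ge W(y)$. For strong anonymity, a permutation $\pi$ maps $\{t:x_{\pi(t)}<q_k\}$ bijectively onto $\{s:x_s<q_k\}$. The cardinalities in $A$ therefore coincide, so $P(x_\pi)=P(x)$. Finite anonymity follows as a special case.

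\emph{Liminf-Restricted Dominance, and restricted dominance as a corollary.} Let $x,y$ differ only at $i$, with $x_i>y_i$ and $\ell(x)>y_i$. Since $\Q$ is dense, I pick a rational $q_k$ with $y_i<q_k<\min(x_i,\ell(x))$; this is possible because $\min(x_i,\ell(x))>y_i\ge 0$ and $\ell(x)\le 1$, so such a $q_k$ lies in $(0,1)$. Because $q_k<\ell(x)$, only finitely many $t$ satisfy $x_t<q_k$, so $N_k(x)$ is finite. Moreover $x_i\ge q_k>y_i$, so $N_k(y)=N_k(x)+1$, a strict increase within $\Nzero$. For every other index $j$ we still have $N_j(x)\le N_j(y)$ by monotonicity. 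Let $j^*$ be the first coordinate at which $P(x)$ and $P(y)$ differ; it exists because coordinate $k$ differs. At $j^*$ we have $N_{j^*}(x)\le N_{j^*}(y)$ together with inequality, so $N_{j^*}(x)<N_{j^*}(y)$. Hence $P(x)<_{\lex}P(y)$, which gives $W(x)>W(y)$. Restricted dominance is the special case $y=(a,b_{\con})$, $x=b_{\con}$, $i=1$, where $\ell(x)=b>a$. The point to stress is that the improvement only has to be detected at \emph{some} threshold, not the first one, because monotonicity rules out any earlier coordinate moving the wrong way.

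\emph{Stationarity.} This is the step I expect to be the main obstacle, because shifting a stream changes cut-counts in a way that depends on $x_1$. Let $x_1=y_1=c$. Then $N_k(x)=N_k(x')+\mathbf 1[c<q_k]$, and the same holds for $y$. The key observation is that adding the same quantity $\delta_k\in\{0,1\}$ to the $k$-th coordinate of both profiles preserves equality and strict order in each coordinate. This uses the conventions $\infty+1=\infty$ and, for $n<m\le\infty$, $n+1<m+1$. Consequently the first index of disagreement, and the direction of disagreement, are the same for the pair $(P(x),P(y))$ as for $(P(x'),P(y'))$. It follows that $P(x)\le_{\lex}P(y)$ if and only if $P(x')\le_{\lex}P(y')$, which is exactly the stationarity equivalence. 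The care needed here is to check the $\infty$ cases explicitly, namely that equality $\infty=\infty$ is preserved and that a finite value never jumps past $\infty$. Once that is done, the argument is a coordinatewise order-preservation check combined with Lemma \ref{L1}.
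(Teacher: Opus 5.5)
Your proposal is correct and follows essentially the same route as the paper: you reduce each axiom, via Lemma \ref{L1}, to a lexicographic comparison of the cut-count profiles $P(x)$. You also use the same rational threshold $y_i<q_k<\min\{x_i,\ell(x)\}$ for Liminf-Restricted Dominance and the same coordinatewise addition of a $\{0,1\}$-vector for stationarity, which you inline where the paper states it as a separate Claim. Your explicit checks of the $\infty$ cases and of $q_k\in(0,1)$ make the argument slightly more careful, but the approach is the same.
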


\begin{proof}
By Lemma \ref{L1}, $\Phi$ is a real-valued function on  $A^{\N}$.
Since $P(x)\in A^{\N}$ for every infinite utility stream $x$, the value $W(x)=-\Phi(P(x))$ is well-defined.

\noindent\textbf{Monotonicity.}
Suppose $x, y\in X$ and $x_t\ge y_t$ for all $t\in \N$.
For every $q_k$,
\[
\{t:x_t<q_k\}\subseteq \{t:y_t<q_k\}.
\]
Hence
\[
  N_k(x)\le N_k(y)
  \qquad\text{for every }k.
\]
Therefore, $P(x) \le_{\lex} P(y)$. 
Hence,  by Lemma \ref{L1} and by the above definition of $W$, $W(x)\geq W(y)$.
Thus the SWF $W(\cdot)$ is monotone.

\noindent\textbf{Strong anonymity.}
Let $\pi$ be an arbitrary permutation of $\N$.
For every $k$,
\[
  |\{t:x_{\pi(t)}<q_k\}|
  =
  |\{t:x_t<q_k\}|.
\]
Thus $N_k(x_{\pi})=N_k(x)$ for every $k$.
Hence $P(x_{\pi})=P(x)$ and $W(x_{\pi})=W(x)$.
Therefore strong anonymity holds, and finite anonymity follows immediately.

\noindent\textbf{Liminf-Restricted Dominance.}
Let $x,y\in X$ be such that $x_i>y_i$ for some $i\in\N$ and $x_j=y_j$ for all $j\ne i$.
Assume that $\ell(x)>y_i$.
Since $x\ge y$, for every rational threshold $q_k$,
\[
  N_k(x)\le N_k(y).
\]
Choose a rational $q_m\in\Q(01)$ such that
\[
  y_i<q_m<\min\{x_i,\ell(x)\}.
\]
This is possible by density of the rationals.
Since $q_m<\ell(x)$, the set $\{t:x_t<q_m\}$ is finite.
Moreover, because $x_i>q_m>y_i$ and $x_j=y_j$ for all $j\ne i$,
\[
  N_m(y)=N_m(x)+1.
\]
Thus the profiles $P(x)$ and $P(y)$ differ at some coordinate, and at every coordinate at which they differ the count for $x$ is weakly smaller, with strict inequality at coordinate $m$.
Therefore the first coordinate at which $P(x)$ and $P(y)$ differ favors $x$:
\[
  P(x)<_{\lex}P(y).
\]
Since $W=-\Phi(P(\cdot))$, it follows that
\[
  W(x)>W(y).
\]
Hence $W$ satisfies Liminf-Restricted Dominance.
Since Liminf-Restricted Dominance implies restricted dominance (by letting $x=b_{\con}$, $y=\left(a, b_{\con}\right)$, with $a<b$, $i=1$, $x_{i}=b$ and $y_{i}=a$), it follows that $W$ also satisfies restricted dominance.

The following claim would be useful for showing that the SWF satisfies stationarity.

\begin{claim}
Let $u,v\in A^{\mathbb N}$ and let $e\in\{0,1\}^{\mathbb N}$.
Define coordinate-wise addition, with $\infty+0=\infty+1=\infty$.
Then
\[
u\le_{\lex} v
\quad\Longleftrightarrow\quad
u+e\le_{\lex} v+e.
\]
\end{claim}

\begin{proof}
$\Longrightarrow$ Suppose that $u\leq_{\lex}v$. 
We distinguish between two cases: if $u=v$, the claim is immediate. 
For the second case, let $u_{k}<v_{k}$ and $u_{j}=v_{j}$ for all $j<k$. 
Thus, $u+e\leq_{\lex}v+e$.

$\Longleftarrow$ Suppose that $u+e\leq _{\lex}v+e$.
Case 1: $u+e=v+e$ which implies $u=v$ and 
$u\leq _{\lex}v$. 
Case 2: $u_{k}+e_{k}<v_{k}+e_{k}$ and $u_{j}+e_{j}=v_{j}+e_{j}$ for all $j<k$. 
Then, $u_{k}<v_{k}$ and $u_{j}=v_{j}$ for all $j<k$. 
Therefore, $u\leq _{\lex}v$.
\end{proof}

\noindent\textbf{Stationarity.}
Let $x,y\in X$ satisfy $x_1=y_1=a$.
For each $k$,
\[
  N_k(x)=N_k(x^{\prime})+\mathbf 1_{\{a<q_k\}}, \;\text{and}\;
  N_k(y)=N_k(y^{\prime})+\mathbf 1_{\{a<q_k\}},%
  \footnote{Recall the definition
 \begin{equation*}
 \mathbf 1_{\{a<q_k\}} =
\begin{cases}
1 & \text{if}\; a<q_{k},\\
0 & \text{if}\;  a\geq q_{k}.
\end{cases}
\end{equation*}}
\]
when the cut-counts are all finite.
Equality holds trivially in case the cut-counts are infinite as we define $\infty+1=\infty$ and $\infty+0=\infty$.
Observe that the vector
\[
e(a)=\bigl(\mathbf 1_{\{a<q_1\}},\mathbf 1_{\{a<q_2\}},\cdots\bigr)
\]
is added coordinate-wise to both profiles $P(x^{\prime})$ and $P(y^{\prime})$ to obtain  the profiles $P(x)$ and $P(y)$ respectively.
Therefore, lexicographic order on $A^{\N}$ is preserved.
Therefore
\[
  P(x)\le_{\lex}P(y)
  \quad\Longleftrightarrow\quad
  P(x^{\prime})\le_{\lex}P(y^{\prime}).
\]
Since $W=-\Phi(P(\cdot))$, this is equivalent to
\[
  W(x)\ge W(y)
  \quad\Longleftrightarrow\quad
  W(x')\ge W(y').
\]
Thus stationarity holds.
\end{proof}

\begin{remark}\label{R1}
Proposition \ref{P1} proves that the conjecture in \citet[Remark 1]{AsheimBanerjeeMitra2021} is false.
Further, in view of \citet[Proposition 2]{AsheimBanerjeeMitra2021}, the SWF must violate the limited discontinuity axiom. 
We show that is indeed the case.
Let
\[
  f_W(c)=W(c_{\con}).
\]
We show that every $q_k\in \Q(01)$ is a discontinuity point of $f_W$.
Fix $q_m$. 
Choose $\varepsilon>0$ such that $\varepsilon<q_m$ and the interval $(q_m-\varepsilon,q_m+\varepsilon)$ contains no member of the finite set $\{q_1,\ldots,q_{m-1}\}$.
For every $b\in(q_m-\varepsilon, q_m)$, the first $m-1$ coordinates of $P(b_{\con})$ and $P((q_m)_{\con})$ agree, while at coordinate $m$,
\[
  N_m(b_{\con})=\infty,
  \qquad
  N_m((q_m)_{\con})=0.
\]
Let $\sigma=(N_1((q_m)_{\con}),\cdots, N_{m-1}((q_m)_{\con}))$.
Then \(\Phi(P((q_m)_{\con}))\) lies in the closed interval \(\overline{I_{\sigma^\frown 0}}\), while every such \(\Phi(P(b_{\con}))\) lies in \(\overline{I_{\sigma^\frown \infty}}\).
By the construction in Lemma \ref{L1}, these two closed intervals are separated by a positive gap.
Therefore $\Phi(P((q_m)_{\con}))$ is separated by a positive distance from all $\Phi(P(b_{\con}))$ with $b\in(q_m-\varepsilon,q_m)$.
Since $W=-\Phi(P(\cdot))$, $f_W$ is not left-continuous at $q_m$.
Hence $q_m$ is a discontinuity point of $f_W$.
Because \(\mathbb Q\cap(0,1)\) is dense in \([0,1]\), the discontinuity set of \(f_W\) is dense in \([0,1]\).
So, $W$ does not satisfy limited discontinuity.
\end{remark}

\section{Conclusion}\label{sec:4}

Proposition \ref{P1} answers an open question posed in \citet[Remark 1]{AsheimBanerjeeMitra2021}. 
Moreover, the SWF constructed in the proposition satisfies Strong Anonymity and Liminf-Restricted Dominance.
The SWF employs a lexicographic threshold-count criterion, and it violates the continuity conditions used in \citet{AsheimBanerjeeMitra2021}, \citet{asheim2022}, and \citet{AsheimKamagaZuber2025}. 
Thus, the example does not conflict with those results; it shows that, without such continuity (a regularity condition), strong anonymity, stationarity, real-valued representability, and Liminf-Restricted Dominance can coexist.

\appendix
\section*{Appendix: Enumeration of rational numbers in $(0,1)$ for use in Proposition 1}
{\footnotesize
\[
\begin{tikzpicture}[
  every node/.style={font=\large},
  arr/.style={-{Latex[length=2mm]}, thick}
]
\matrix (m) [matrix of math nodes,
  row sep=0.50cm,
  column sep=0.75cm
]{\scriptstyle
\dfrac12 & \dfrac13 & \dfrac14 & \dfrac15 & \dfrac16 & \cdots\\
\dfrac23 & \cancel{\dfrac24} & \dfrac25 & \cancel{\dfrac26} & \dfrac27 & \cdots\\
\dfrac34 & \dfrac35 & \cancel{\dfrac36} & \dfrac37 & \dfrac38 & \cdots\\
\dfrac45 & \cancel{\dfrac46} & \dfrac47 & \cancel{\dfrac48} & \dfrac49 & \cdots\\
\dfrac56 & \dfrac57 & \dfrac58 & \dfrac59 & \cancel{\dfrac5{10}} & \cdots\\
\vdots   & \vdots   & \vdots   & \vdots   & \vdots      & \ddots\\
};
\draw[arr] (m-1-1) -- (m-1-2);
\draw[arr] (m-1-2) -- (m-2-1);
\draw[arr] (m-2-1) -- (m-3-1);
\draw[arr] (m-3-1) -- (m-2-2);
\draw[arr] (m-2-2) -- (m-1-3);
\draw[arr] (m-1-3) -- (m-1-4);
\draw[arr] (m-1-4) -- (m-2-3);
\draw[arr] (m-2-3) -- (m-3-2);
\draw[arr] (m-3-2) -- (m-4-1);
\draw[arr] (m-4-1) -- (m-5-1);
\draw[arr] (m-5-1) -- (m-4-2);
\draw[arr] (m-4-2) -- (m-3-3);
\draw[arr] (m-3-3) -- (m-2-4);
\draw[arr] (m-2-4) -- (m-1-5);
\node[below=0.9cm of m-6-3, font=\normalsize]
{Fixed enumeration of rational numbers in  $\mathbb{Q}(01)$.};
\end{tikzpicture}
\]}
The crossed-out fractions are omitted because they repeat rational numbers that have already appeared in the enumeration.
The first ten rational numbers are:
\[
q_1=\frac12,\,
q_2=\frac13,\,
q_3=\frac23,\,
q_4=\frac34,\,
q_5=\frac14,\,
q_6=\frac15,\,
q_7=\frac25,\,
q_8=\frac35,\,
q_9=\frac45,\,
q_{10}=\frac56,\,
\cdots
\]

\end{document}